\theoremstyle{plain}
\newtheorem{theorem}{Theorem}[section]
\newtheorem{corollary}{Corollary}[section]
\newtheorem{proposition}{Proposition}[section]
\theoremstyle{definition}
\newtheorem{definition}{Definition}[section]
\newtheorem{remark}{Remark}[section]
\begin{document}

\title{\textbf{Quantifying non-periodicity of non-stationary time series through wavelets}}


\author{V. J. Bol\'os$^1$, R. Ben\'{\i}tez$^1$, R. Ferrer$^2$ \\ \\
{\small $^1$ Dpto. Matem\'aticas para la Econom\'{\i}a y la Empresa, Facultad de Econom\'{\i}a,} \\
{\small Universidad de Valencia. Avda. Tarongers s/n, 46022 Valencia, Spain.} \\
{\small e-mail\textup{: \texttt{vicente.bolos@uv.es} (V. J. Bol\'os), \texttt{rabesua@uv.es} (R. Ben\'{\i}tez)}} \\ \\
{\small $^2$ Dpto. Econom\'{\i}a Financiera y Actuarial, Facultad de Econom\'{\i}a,} \\
{\small Universidad de Valencia. Avda. Tarongers s/n, 46022 Valencia, Spain.} \\
{\small e-mail\textup{: \texttt{roman.ferrer@uv.es}}} \\}

\date{June 2019}

\maketitle

\begin{abstract}
In this paper, we introduce a new wavelet tool for studying the degree of non-periodicity of time series that is based on some recently defined tools, such as the \textit{windowed scalogram} and the \textit{scale index}. It is especially appropriate for non-stationary time series whose characteristics change over time and so, it can be applied to a wide variety of disciplines. In addition, we revise the concept of the scale index and pose a theoretical problem: it is known that if the scale index of a function is not zero then it is non-periodic, but if the scale index of a function is zero, then it is not proved that it has to be periodic. This problem is solved for the particular case of the Haar wavelet, thus reinforcing the interpretation and applicability of the scale index as a useful tool for measuring non-periodicity. Finally, we discuss the relationship between non-periodicity and unpredictability, comparing the new wavelet tool with the sample entropy.
\end{abstract}


\section{Introduction}
\label{intro}

A large part of the analysis of a time series or signal consists of the determination of its main features and whether those features remain constant over time, so that this series can be considered stationary. A very important type of stationary time series are the periodic signals. In fact, it is obvious that the more periodic a signal is, the more easy and reliable the forecasts of this signal will be. However, in practice, a time series will be rarely periodic because, for example, it may be affected by a random noise, it may be a quasi-periodic signal arising from a dynamical system, or even a chaotic signal.

The \textit{scale index} is a wavelet tool designed to measure the degree of non-periodicity of a time series through its wavelet scalogram, allowing to quantify how much chaotic a signal is. This tool was introduced in~\cite{Ben10} and it has been used in a wide variety of recent works applied in many different scientific disciplines, such as the study of speech signals \cite{Hes13}, pseudo random number generators \cite{Akh14,Ava15,Yan16a,Yan16b,Tun16,Mur16}, images encryption \cite{Yan15,Yan16a}, meteorology \cite{Qib13}, biomedicine \cite{Beh13}, robotics \cite{Fel15}, engineering \cite{Pic15}, mechanical fault identification \cite{Seu15}, etc.

The scale index is based on a result, proved in~\cite{Ben10}, which states that the continuous wavelet transform of a periodic function, associated with a compactly supported wavelet, vanishes when the scale is twice the period (for all times). From this result, the scale index vanishes for all periodic signals and depends continuously on the signal itself. Therefore, when a signal is close to a periodic one, its scale index will take a value close to zero.

Nevertheless, the scale index simply provides a global or average measure of non-periodici\-ty for the entire time series, and hence it does not consider the time location. This fact could be a serious handicap for the study of non-stationary signals whose behaviour varies substantially over time and, consequently, their non-periodicity may not be constant. For these cases, we introduce the \textit{windowed scale index}, that is based on the \textit{windowed scalogram} (see \cite{Bol17}) and the scale index. This new wavelet tool inherits the properties of the scale index for measuring the non-periodicity of a time series centered at a given point with a given time radius, preserving in this way the time location. Moreover, using the time radius as a parameter makes this tool much more versatile.

This paper is organized as follows. Section \ref{sec:back} introduces the wavelet mathematical background, revises the concept of the scale index and presents a new theoretical result (see Proposition \ref{prop2.1}) that reinforces the interpretation of the scale index (windowed or not) as a non-periodicity measure. In Section \ref{sec:wsi} we define the windowed scale and, in Section \ref{sec:ex}, we give examples and an application in economics. Moreover, we discuss the relationship between non-periodicity and unpredictability, comparing the windowed scale index with the \textit{sample entropy}, a well-known complexity measure. Finally, Section \ref{sec:conc} provides some concluding remarks.

\section{The scale index revisited}
\label{sec:back}

\subsection{Basic concepts of wavelets}
\label{prelim}

A \textit{wavelet} is a function $\psi \in L^2\left( \mathbb{R}\right) $ with zero average (i.e.  $\int _{\mathbb{R}} \psi =0$), normalized ($\| \psi \| =1$) and ``centered'' in the neighborhood of $t=0$ (see \cite{mal98}). Scaling $\psi$ by $s>0$ and translating it by $u\in\mathbb{R}$, we can create a family of \textit{time--frequency atoms} (also called \textit{daughter wavelets}), $\psi_{u,s}$, as follows
\begin{equation}
\label{eq:psius}
\psi _{u,s}(t):=\frac{1}{\sqrt{s}}\psi \left( \frac{t-u}{s}\right) .
\end{equation}

Given a time series $f\in L^2\left( \mathbb{R}\right) $, the \textit{continuous wavelet
transform} (CWT) of $f$ at time $u$ and scale $s$ with respect to the wavelet $\psi $ is
defined as
\begin{equation}
\label{eq:cwt}
Wf\left( u,s\right) :=\left<f,\psi _{u,s}\right> =\int _{-\infty}^{+\infty}f(t)\psi ^*
_{u,s}(t) \, \textrm{d}t,
\end{equation}
where $ ^*$ denotes the complex conjugate. The CWT allows us to obtain the frequency
components (or \textit{details}) of $f$ corresponding to scale $s$ and time location $u$,
thus providing a time-frequency decomposition of $f$.

The \textit{scalogram} of a time series $f$ at a given scale $s>0$ is given by
\begin{equation}
\label{scs}
\mathcal{S}(s):= \left( \int _{-\infty } ^{+\infty } | Wf\left( u,s\right) | ^2 \,
\textrm{d}u \right) ^{1/2} .
\end{equation}
The scalogram\footnote{We use the definition of scalogram given in \cite{Ben10} and \cite{Bol17}, that is not the one commonly acknowledged: usually it is the squared magnitude of the CWT.} of $f$ at $s$ is the $L^2$-norm of $Wf\left( u,s\right)$ (with respect to the time variable $u$) and captures the ``energy'' of the CWT of the time series $f$ at this particular scale. It allows for the identification of the most representative scales of a time series, i.e. the scales that contribute most to its total energy. Moreover, given a function $f\in L^2\left( \mathbb{R}\right)$ it can be decomposed into a sum of CWTs computed at base 2 power scales
\begin{equation}
\label{eq:fjk}
f=\sum _{j,k\in \mathbb{Z}}Wf\left( 2^k j,2^k\right)\psi _{2^k j,2^k},
\end{equation}
under some conditions on the wavelet (see \cite{mal98}). So, with respect to the representation of the scalogram and taking into account (\ref{eq:fjk}), it is convenient to use a binary logarithmic re-scalation in the abscissa axis, making base 2 power scales equidistant (see \cite{Bol17}).

\subsection{The scale index}

In this section, we define the \textit{scale index}, a
wavelet tool for measuring non-periodicity through the scalogram, which was previously
introduced in \cite{Ben10}. However, first we need some previous results.

The next theorem ensures that if a function $f$ has details at every scale (i.e. the scalogram of $f$ does not vanish at any scale), then it is non-periodic. For a detailed proof see \cite{Ben10}.

\begin{theorem}
\label{teo:1}
Let $\psi $ be a compactly supported wavelet. If $f:\mathbb{R}\rightarrow \mathbb{C}$ is a $T$-periodic function in $L^2\left( \left[ 0,T\right] \right) $, then $Wf\left( u,2T\right) =0$ for all $u\in \mathbb{R}$.
\end{theorem}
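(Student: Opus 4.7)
The plan is to reduce $Wf(u, 2T)$ to an integral over a single period of $f$ against a suitable periodization of $\psi^{*}$, and then argue that this periodization vanishes identically.

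First I would rescale in (\ref{eq:cwt}) via $\tau = (t-u)/(2T)$ to get
\[
Wf(u, 2T) = \sqrt{2T}\int_{\mathbb{R}} f(u+2T\tau)\, \psi^{*}(\tau)\, d\tau.
\]
Since $f$ is $T$-periodic, the factor $\tau \mapsto f(u+2T\tau)$ is $\frac{1}{2}$-periodic. Using the compact support of $\psi$, I would break $\mathbb{R}$ into the half-integer intervals $[k/2, (k+1)/2]$ with $k \in \mathbb{Z}$ (only finitely many contribute), translate each integration back to $[0, 1/2]$, and pull out the periodic factor to obtain
\[
Wf(u, 2T) = \sqrt{2T}\int_0^{1/2} f(u+2T\sigma)\, \Psi(\sigma)\, d\sigma,
\qquad \Psi(\sigma) := \sum_{k\in\mathbb{Z}} \psi^{*}\!\left(\sigma + \tfrac{k}{2}\right).
\]

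Since $u$ is arbitrary and the restriction of $f$ to one period can be any function in $L^{2}([0,T])$, the theorem is equivalent to $\Psi \equiv 0$. The main obstacle is that the bare hypothesis $\int\psi = 0$ only yields $\int_0^{1/2}\Psi = 0$, not pointwise vanishing. I would close this gap by Poisson summation, which identifies the Fourier coefficients of $\Psi$ with the samples $\widehat{\psi^{*}}(2m)$ for $m\in\mathbb{Z}$. Thus $\Psi \equiv 0$ is equivalent to $\widehat{\psi}(2m)=0$ for every integer $m$, an identity that is automatic for the compactly supported wavelets typically used in practice: the $m=0$ case is just the zero-mean condition, and for $m\neq 0$ it follows, in the orthonormal case, from the two-scale refinement equation combined with $\widehat{\phi}(k)=\delta_{k,0}$ at integer points of the associated scaling function. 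This is the step I expect to depend most delicately on the precise class of ``compactly supported wavelets'' assumed.

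As a sanity check, and as the template for the Haar-specific material the paper revisits later, the Haar wavelet gives an elementary derivation that bypasses the Fourier machinery entirely: $\psi^{*}_{u,2T}$ equals $+1/\sqrt{2T}$ on $[u, u+T)$ and $-1/\sqrt{2T}$ on $[u+T, u+2T)$, so
\[
Wf(u, 2T) = \tfrac{1}{\sqrt{2T}}\int_u^{u+T} f(t)\, dt - \tfrac{1}{\sqrt{2T}}\int_{u+T}^{u+2T} f(t)\, dt = 0
\]
by $T$-periodicity. The general case proceeds along the same conceptual lines, with the pointwise cancellation inside the Haar step replaced by the vanishing of $\Psi$ described above.
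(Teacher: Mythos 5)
Your reduction is sound and is the natural route here; note that the paper itself offers no proof of Theorem \ref{teo:1} (it defers entirely to \cite{Ben10}), so the comparison must be on the merits. After your change of variables the theorem is indeed equivalent to the vanishing of the half-integer periodization $\Psi(\sigma)=\sum_{k}\psi^*(\sigma+k/2)$, i.e.\ to $\hat{\psi}(2m)=0$ for all $m\in\mathbb{Z}$. The genuine gap is precisely the step you flag but do not close: this condition does \emph{not} follow from the definition of wavelet used in Section \ref{prelim} (compact support, $\int_{\mathbb{R}}\psi=0$, $\|\psi\|=1$). The zero-mean hypothesis only gives the $m=0$ case. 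Concretely, take $\psi=\sqrt{3/2}\,(\chi_{[0,1/3)}-\chi_{[1/3,2/3)})$, which is compactly supported, normalized and has zero average; a direct computation gives $\hat{\psi}(2)\neq 0$, and with $f(t)=e^{2\pi i t}$, $T=1$, one gets $Wf(u,2)=\sqrt{2}\,e^{2\pi i u}\,\overline{\hat{\psi}(2)}\neq 0$. So under the paper's stated definition the implication is not merely unproved by your argument --- it is false, and no completion of the final step is possible without strengthening the hypothesis on $\psi$.

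What rescues the statement for the wavelets actually used in the paper is exactly the extra structure you invoke: for compactly supported \emph{orthonormal} wavelets arising from a multiresolution analysis (Haar, Daubechies), one has $\hat{\psi}(2m)=\overline{m_0(1/2)}\,\hat{\phi}(m)$, and the quadrature-mirror condition forces $m_0(1/2)=0$ (with $\hat{\phi}(m)=\delta_{m,0}$ handling $m\neq 0$ independently). If you restrict the theorem to that class and state the restriction, your argument closes completely. Your Haar sanity check is correct, and it is the same cancellation the paper exploits in its proof of Proposition \ref{prop2.1}; but it cannot serve as a ``template'' for the general case precisely because the pointwise cancellation it relies on is the special property $\Psi\equiv 0$, which is what needs to be --- and cannot in general be --- proved.
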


Note that if $f$ is a $T$-periodic function in $L^2\left(
\left[ 0,T\right] \right) $, and $\psi $ is a compactly supported wavelet, then $Wf\left(
u,s\right) $ is well-defined for $u\in \mathbb{R}$ and $s>0$, although $f$ is not in
$L^2\left( \mathbb{R}\right) $.

In Theorem \ref{teo:1}, $f$ is defined over the entire $\mathbb{R}$, but, in practice, time series have a finite time domain and produce ``border effects'' when the wavelet exceeds the domain limits. So, given a $T$-periodic time series with finite time domain, these border effects cause the scalogram not to vanish at scale $2T$. Therefore, we are going to define a special type of scalogram for precisely avoiding the undesired border effects.

\begin{definition}
\label{def:innersc}
Given a time series $f$ defined over a finite time interval $I=\left[ a,b\right]$, the \textit{inner scalogram} of a $f$ at a scale $s$ is given by
\begin{equation*}
\mathcal{S}^{\textrm{inner}}\left( s\right) :=\| Wf\left( s,u\right) \| _{J(s)}
=\left( \int _{c(s)}^{d(s)} | Wf\left( s,u\right) |^2 \textrm{d}u\right) ^{\frac{1}{2}},
\end{equation*}
where $J(s)=\left[ c(s),d(s) \right]\subseteq I $ is the maximal subinterval in $I$ for which the support of $\psi _{u,s}$ is included in $I$ for all $u\in J(s)$. Obviously, the length of $I$ must be big enough for $J(s)$ not to be empty or too small, i.e.  $b-a\gg sl$, where $l$ is the length of the support of $\psi $.

Since the length of $J(s)$ depends on the scale $s$, the values of the inner scalogram at
different scales cannot be compared. To avoid this problem, we can \textit{normalize} the
inner scalogram:
\begin{equation*}
\overline{\mathcal{S}}^{\textrm{inner}}\left( s\right) =
\frac{\mathcal{S}^{\textrm{inner}} \left( s\right) }{(d(s)-c(s))^{\frac{1}{2}}}.
\end{equation*}
\end{definition}

From Theorem \ref{teo:1} and Definition \ref{def:innersc} we obtain the following corollary, that can be also found in \cite{Ben10}.

\begin{corollary}
\label{result}
Let $\psi $ be a compactly supported wavelet. If $f:I=\left[ a,b\right] \rightarrow \mathbb{C}$ is a $T$-periodic function in $L^2\left( \left[ a,a+T\right] \right) $, then the (normalized) inner scalogram of $f$ at scale $2T$ is zero.
\end{corollary}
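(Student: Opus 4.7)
The plan is to reduce the corollary to Theorem \ref{teo:1} by passing to the periodic extension of $f$ and then exploiting the support condition built into the definition of $J(s)$.

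First I would let $\tilde f:\mathbb{R}\to\mathbb{C}$ denote the $T$-periodic extension of $f$ to the whole real line. Because $f\in L^2([a,a+T])$ and a full period already lives inside $I$, the extension $\tilde f$ is a well-defined $T$-periodic function with $\tilde f\in L^2([0,T])$, so Theorem \ref{teo:1} applies to $\tilde f$ and yields $W\tilde f(u,2T)=0$ for every $u\in\mathbb{R}$. (As observed in the remark right after Theorem \ref{teo:1}, the CWT of such a $\tilde f$ against a compactly supported wavelet is well-defined pointwise even though $\tilde f\notin L^2(\mathbb{R})$.)

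Next I would compare $Wf(u,2T)$ with $W\tilde f(u,2T)$ for $u\in J(2T)$. By the very definition of $J(2T)$, whenever $u\in J(2T)$ the support of $\psi_{u,2T}$ is entirely contained in $I$, so in the integral
\[
W\tilde f(u,2T)=\int_{-\infty}^{+\infty}\tilde f(t)\,\psi^*_{u,2T}(t)\,\textrm{d}t
\]
the integrand is supported inside $I$ and there $\tilde f=f$. Hence $Wf(u,2T)=W\tilde f(u,2T)=0$ for all $u\in J(2T)$. Plugging this into
\[
\mathcal{S}^{\textrm{inner}}(2T)=\left(\int_{c(2T)}^{d(2T)}|Wf(u,2T)|^2\,\textrm{d}u\right)^{1/2}
\]
gives $\mathcal{S}^{\textrm{inner}}(2T)=0$, and dividing by the positive factor $(d(2T)-c(2T))^{1/2}$ (recall that we require $b-a\gg sl$ so that $J(2T)$ has positive length) yields $\overline{\mathcal{S}}^{\textrm{inner}}(2T)=0$.

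There is no real obstacle here; the only subtlety worth spelling out is the identification $Wf(u,2T)=W\tilde f(u,2T)$ on $J(2T)$, which is exactly what the inner scalogram was designed to guarantee, and the implicit assumption that $I$ is long enough for $J(2T)$ to be nonempty so that normalization makes sense.
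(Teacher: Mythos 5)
Your proof is correct and is exactly the argument the paper intends: the paper gives no explicit proof, stating only that the corollary follows from Theorem \ref{teo:1} and Definition \ref{def:innersc}, and your passage to the periodic extension plus the observation that $Wf(u,2T)=W\tilde f(u,2T)$ for $u\in J(2T)$ is precisely how those two ingredients combine.
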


\begin{remark}
The concept of inner scalogram can be extended to wavelets that do not have compact support like Morlet wavelets, taking a compact ``effective support'' outside of which the magnitude of the wavelet can be considered negligible. But in this case, some theoretical results like Corollary \ref{result} may not hold exactly. According to \cite{tor98}, the effective support is defined by the $e$-folding time for the autocorrelation of wavelet power spectrum at each scale $s$ that, for Morlet wavelets, is equal to $\sqrt{2}s$.
\end{remark}

These results constitute a valuable tool for detecting periodic and non-periodic signals as a signal with details at every scale must be non-periodic. Moreover, since the scalogram of a $T$-periodic signal vanishes at all $2kT$ scales (for all $k\in \mathbb{N}$), it is sufficient to analyze only scales greater than a fundamental scale $s_0$. Thus, a signal which has details at arbitrarily large scales will be non-periodic.


On the other hand, we can ask if a function $f$ with $Wf\left( u,2T\right) =0$ for all
$u\in \mathbb{R}$ is really a $T$-periodic function. In general, it is not proved and it
remains as an open question. Nevertheless, this result can be proved in the particular case of Haar wavelets assuming that $f$ is bounded:

\begin{proposition}
\label{prop2.1}
Let $\psi $ be the Haar wavelet and let $f:\mathbb{R}\rightarrow \mathbb{R}$ be a continuous and bounded function. Given $T>0$, if $Wf\left( u,2T\right) =0$ for all $u\in
\mathbb{R}$, then $f$ is $T$-periodic.
\end{proposition}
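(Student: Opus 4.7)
The plan is to translate the wavelet hypothesis into a concrete functional equation for $f$, and then exploit boundedness in an asymptotic argument. Recall that the Haar wavelet is $\psi = \chi_{[0,1/2)} - \chi_{[1/2,1)}$, so the daughter wavelet at scale $2T$ is $\psi_{u,2T}(t) = (2T)^{-1/2}$ on $[u,u+T)$ and $-(2T)^{-1/2}$ on $[u+T,u+2T)$. First I would write out
\begin{equation*}
Wf(u,2T) = \frac{1}{\sqrt{2T}}\left(\int_u^{u+T} f(t)\,dt - \int_{u+T}^{u+2T} f(t)\,dt\right),
\end{equation*}
so the hypothesis $Wf(u,2T)=0$ for all $u$ is equivalent to saying that the function
\begin{equation*}
F(u) := \int_u^{u+T} f(t)\,dt
\end{equation*}
satisfies $F(u+T)=F(u)$ for every $u\in\mathbb{R}$; that is, $F$ is $T$-periodic.

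The second step is to extract a functional equation for $f$ itself by differentiating $F$. Since $f$ is continuous, $F$ is $C^1$ with $F'(u)=f(u+T)-f(u)$ by the fundamental theorem of calculus. Setting $g(u):=f(u+T)-f(u)$, the $T$-periodicity of $F$ immediately transfers to its derivative, so $g$ is also $T$-periodic: $g(u+T)=g(u)$ for all $u$.

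The final step is an induction on iterates of the translation by $T$. From $f(u+T)=f(u)+g(u)$ and the $T$-periodicity of $g$ it follows by induction that
\begin{equation*}
f(u+nT) = f(u) + n\,g(u) \qquad \text{for every } n\in\mathbb{N} \text{ and every } u\in\mathbb{R}.
\end{equation*}
Here is where the boundedness hypothesis is essential: if $g(u_0)\neq 0$ for some $u_0$, the right-hand side would be unbounded in $n$, contradicting the assumption that $f$ is bounded. Hence $g\equiv 0$, that is, $f(u+T)=f(u)$ for all $u$, which is the desired $T$-periodicity.

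The key insight, and the step most prone to being overlooked, is that the single scale condition $Wf(u,2T)\equiv 0$ yields two pieces of information at once, namely $T$-periodicity of $F$ and (by differentiation) of $g$; neither alone would suffice. Boundedness is then what rules out ``linearly drifting'' counterexamples such as $f(u)=\frac{u}{T}\sin(2\pi u/T)$, which satisfies the hypothesis but is not periodic, showing that the boundedness assumption is genuinely needed rather than a technical convenience.
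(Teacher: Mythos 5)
Your proof is correct, and it takes a genuinely different route from the paper's. Both arguments begin identically, by translating $Wf(u,2T)\equiv 0$ into the $T$-periodicity of $F(u)=\int_u^{u+T}f(t)\,\mathrm{d}t$, and both ultimately play linear growth in $n$ against the bound on $f$; but the order of operations is reversed. The paper first proves that $F$ is \emph{constant}: assuming $F(u_1)\neq F(u_2)$, it compares $\int_{u_1}^{u_1+N}f=NF(u_1)$ with $\int_{u_2}^{u_2+N}f=NF(u_2)$ and derives a contradiction because the difference of these two long integrals is bounded by $2(u_2-u_1)M$; only then does it differentiate to conclude $f(u+T)-f(u)=F'(u)=0$. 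You differentiate first, obtaining that $g(u)=f(u+T)-f(u)$ is itself $T$-periodic, and then use the telescoping identity $f(u+nT)=f(u)+n\,g(u)$ to force $g\equiv 0$ directly from the boundedness of $f$. Your version is arguably more economical: it works with pointwise values of $f$ rather than integrals over long intervals, avoids the two-point comparison and the accompanying estimate, and as a bonus produces the explicit counterexample $f(u)=\tfrac{u}{T}\sin(2\pi u/T)$ (for which $F(u)=-\tfrac{T}{2\pi}\cos(2\pi u/T)$ is indeed $T$-periodic), cleanly demonstrating that boundedness cannot be dropped. The paper's route, by establishing that the windowed integral is constant rather than merely periodic, yields slightly more structural information along the way, but for the statement as given the two proofs are equally rigorous.
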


\begin{proof}
Without loss of generality we may assume that $T=1$. Let us define
\[
g(u):=\int_{u}^{u+1} f(t)\,\textrm{d}t
\]
for all $u\in \mathbb{R}$. Then, using the Haar wavelet
\[
\psi (t) =\left\{
\begin{array}{ll}
1 & 0\leq t <1/2 \\
-1 & 1/2\leq t <1 \\
0 & \text{otherwise},
\end{array}
\right.
\]
we have
\[
Wf\left( u,2\right) =\int_{u}^{u+1} f(t)\,\textrm{d}t - \int_{u+1}^{u+2}
f(t)\,\textrm{d}t=0 \Longrightarrow g(u)=g(u+1)
\]
for all $u\in \mathbb{R}$ and hence $g$ is $1$-periodic.

We are going to prove that $g$ is in fact a constant function by reductio ad absurdum.
Supposing $g$ is not constant, we can find $u_1,u_2\in \left[ 0,1\right[ $ with
$c_1:=g\left( u_1\right) \neq c_2:=g\left( u_2\right) $. If $M$ is an upper bound for
$\left| f\right| $, then we take $N\in \mathbb{N}$ such that $N\left| c_2-c_1\right| >2M$.
Thus,
\begin{equation}
\label{eq:nc1}
\int _{u_1}^{u_1+N} f(t)\,\textrm{d}t=\sum _{j=1}^N \int _{u_1+j-1}^{u_1+j}
f(t)\,\textrm{d}t=\sum _{j=1}^N g\left( u_1\right) =Nc_1,
\end{equation}
since $g$ is $1$-periodic. Analogously,
\begin{equation}
\label{eq:nc2}
\int _{u_2}^{u_2+N} f(t)\,\textrm{d}t=Nc_2.
\end{equation}
So, from \eqref{eq:nc1} and \eqref{eq:nc2}, we have
\begin{equation}
\label{eq:gt2m}
\left| \int _{u_2}^{u_2+N} f(t)\,\textrm{d}t-\int _{u_1}^{u_1+N} f(t)\,\textrm{d}t\right|
=N\left| c_2-c_1\right| >2M.
\end{equation}
On the other hand, supposing $u_1<u_2$, we have
\begin{eqnarray}
\nonumber
\left| \int _{u_2}^{u_2+N} f(t) \, \textrm{d}t-\int _{u_1}^{u_1+N} f(t) \,
\textrm{d}t\right| = \left| \int _{u_2}^{u_1+N} f(t) \, \textrm{d}t+\int _{u_1+N}^{u_2+N}
f(t) \, \textrm{d}t -\int _{u_1}^{u_2} f(t)\,\textrm{d}t \right. \\
\label{eq:lt2m}
-\left.  \int _{u_2}^{u_1+N} f(t) \, \textrm{d}t\right| \leq \int _{u_1+N}^{u_2+N} \left|
f(t)\right| \, \textrm{d}t+\int _{u_1}^{u_2} \left| f(t)\right| \, \textrm{d}t \leq
2(u_2-u_1)M<2M.
\end{eqnarray}
Since \eqref{eq:gt2m} contradicts \eqref{eq:lt2m}, $g$ must be a constant function and then,
\[
0=g'(u)=\dfrac{\textrm{d}}{\textrm{d}u} \int_{u}^{u+1} f(t)\,\textrm{d}t =f(u+1)-f(u)
\]
for all $u\in\mathbb{R}$, concluding that $f$ is a $1$-periodic function.
\end{proof}

Taking into account these results, we are going to define the \textit{scale index}, that was previously introduced in \cite{Ben10}. However, we present a slightly different definition.

\begin{definition}
\label{def:si}
The \textit{scale index} of a time series $f$ in the scale interval $\left[ s_0,s_1\right] $ is given by the quotient
\begin{equation}
\label{eq:si}
i_{\textrm{scale}}:=\frac{\mathcal{S}(s_{\textrm{min}})}{\mathcal{S}(s_{\textrm{max}})},
\end{equation}
where $s_{\textrm{max}}\in \left[ s_0,s_1\right] $ is the smallest scale such that $\mathcal{S}(s)\leq \mathcal{S}(s_{\textrm{max}})$ for all $s\in \left[ s_0,s_1\right] $, and $s_{\textrm{min}}\in \left[ s_{\textrm{max}},2s_1\right] $ is the smallest scale such that $\mathcal{S}(s_{\textrm{min}})\leq \mathcal{S}(s)$ for all $s\in \left[ s _{\textrm{max}}, 2s_1 \right] $. Note that for compactly supported signals, the use of the normalized inner scalogram is recommended in order to fulfill Corollary \ref{result} and to avoid border effects.
\end{definition}

From its definition, $i_{\textrm{scale}}\in \left[ 0,1\right] $ and it can be interpreted as a measure of the degree of non-periodicity of a signal in the scale interval $\left[ s_0,s_1\right] $: the scale index will be numerically close to $0$ for periodic and quasi-periodic signals, and close to $1$ for highly non-periodic chaotic signals.

The difference between Definition \ref{def:si} and the definition given in \cite{Ben10} is that in Definition \ref{def:si} the scale $s_{\textrm{min}}$ is sought in $\left[ s_{\textrm{max}},2s_1\right] $ instead of $\left[ s_{\textrm{max}},s_1\right] $. This change is motivated by Theorem \ref{teo:1} and Corollary \ref{result}, because given a $T$-periodic signal (being $T$ the least period) its most representative scale is usually near $T$, and the scalogram vanishes at scale $2T$. So, we need to compute the scalogram up to the scale $2s_1$ in order to detect periodicities in $\left[ s_0,s_1\right] $. As a consequence of this fact, if there are scales in $\left[ s_1,2s_1\right] $ at which the scalogram takes a greater value than at $s_{\textrm{max}}$, then $s_1$ should be increased if we want to take into account these important scales in our analysis. Or, from another point of view, taking the definition of the scale index given in \cite{Ben10}, if the scale $s_{\textrm{max}}$ is in $\left[ s_1/2,s_1\right] $, then we should take a bigger $s_1$, at least up to $2s_{\textrm{max}}$, but only if we have no objection to include larger scales in our study of non-periodicity.

For example, Figure \ref{fig:senoscale} plots the scalogram and the scale index of a given signal, taking $s_0=4$ and varying $s_1$ from $4$ to $64$. It can be seen that $i_{\textrm{scale}}=1$ for $s_1\in \left[ 4,5.5\right] \cup \left[ 13.3,17\right] $ approximately, but there are scales in $\left[ s_1,2s_1\right] $ at which the scalogram takes a greater value than at $s_{\textrm{max}}$. So, as it is indicated above, if we want to make a wider scale study of non-periodicity, then $s_1$ should be increased in these cases. Nevertheless, if we are interested in measuring non-periodicity only in a determined scale interval $\left[ s_0,s_1\right] $ with $s_1\in \left[ 4,5.5\right] \cup \left[ 13.3,17\right] $, then we can conclude that the signal is highly non-periodic in $\left[ s_0,s_1\right] $, and the reason in this case is that there are significant scales outside the range of study.

\begin{figure}[tb]
\begin{center}
  \includegraphics[width=1\textwidth]{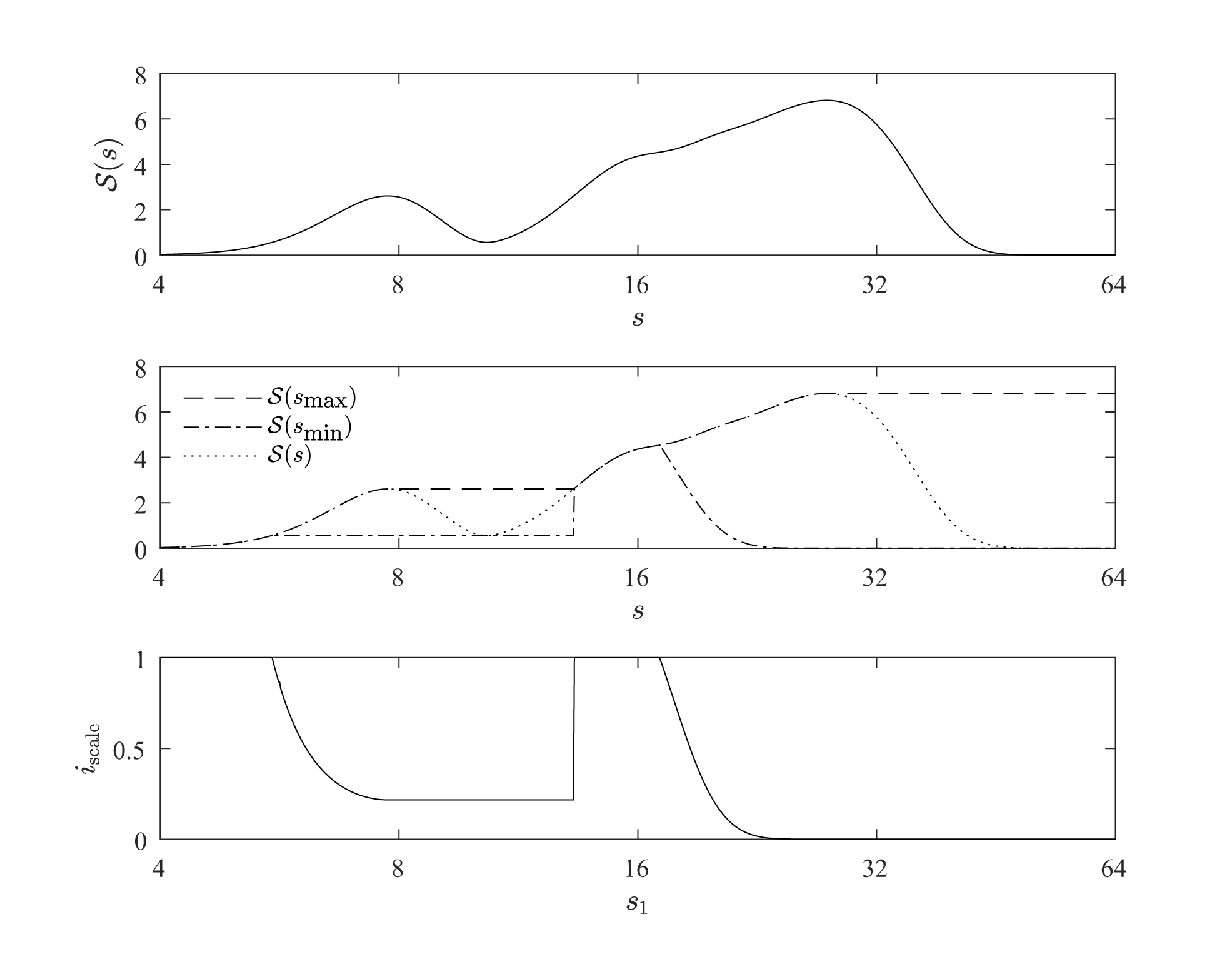}
\end{center}
\caption{Top: scalogram (using Morlet wavelet) of the signal $\sum _{i=1}^6 \sin\left( \frac{2\pi}{T_i}t\right) $ with $t\in \left[ 0,1000\right] $ and $T_i=8,16,20,24,28,32$ for $i=1,\ldots,6$ respectively, i.e. a combination of several signals of periods $T_i$. Center: the same scalogram as above (dotted), jointly with $\mathcal{S}\left( s_{\textrm{max}}\right) $ (dashed) and $\mathcal{S}\left( s_{\textrm{min}}\right) $ (dash-dotted) taking $s_0=4$ and $s_1=s$. Bottom: the corresponding scale index (i.e. $\mathcal{S}\left( s_{\textrm{min}}\right) /\mathcal{S}\left( s_{\textrm{max}}\right) $) taking $s_0=4$ and $s_1$ varying from $4$ to $64$.  }
\label{fig:senoscale}
\end{figure}

At this point, it is clear that the choice of the scale interval $\left[ s_0,s_1\right] $ is an important issue in the non-periodicity analysis. Since the non-periodic character of a signal is given by its behaviour at large scales, there is no need for $s_0$ to be very small. In general, we can choose $s_0$ such that $s_{\textrm{max}}=s_0+\epsilon$ where $\epsilon$ is positive and small. On the other hand, $s_1$ must be greater than all the relevant scales that we want to study. Moreover, if $s_{\textrm{min}}\simeq 2s_1$, then the scalogram decreases at large scales and it is recommended to increase $s_1$ in order to distinguish between a non-periodic signal and a periodic signal with a very large period. In general, it is recommended to make an study varying $s_1$ as in Figure \ref{fig:senoscale} (bottom) in order to visualize the evolution of the scale index.

\begin{remark}
It is known that the wavelet power spectra and, consequently, the scalogram of a signal are biased in favor of large scales (see \cite{Liu07}). This feature has to be taken into account by some wavelet tools that quantify the ``energy density'' based on the scalogram, such as the \textit{wavelet squared coherency} (see \cite{tor98,tor99}) or the \textit{windowed scalogram difference} (see \cite{Bol17}). In these tools, the scalogram is multiplied by the factor $\frac{1}{\sqrt{s}}$ for normalizing the weight of each scale (see Figure \ref{fig:sqrts}), or, from another point of view, the mother and daughter wavelets are normalized using the $L^1$-norm instead of the $L^2$-norm. However, in the case of the scale index, this correction would not be necessary because the scalogram of a white noise signal is more or less constant at all scales giving a scale index close to $1$ for any $s_1$, and this is the property that we want to preserve. If, on the other hand, we apply the correction for converting the scalogram into an ``energy density'' measure, the scale index of a white noise signal would tend to zero as we increase $s_1$ (see Figure \ref{fig:sqrts2}) and this is not desirable.
\end{remark}

\begin{figure}[tb]
\begin{center}
  \includegraphics[width=1\textwidth]{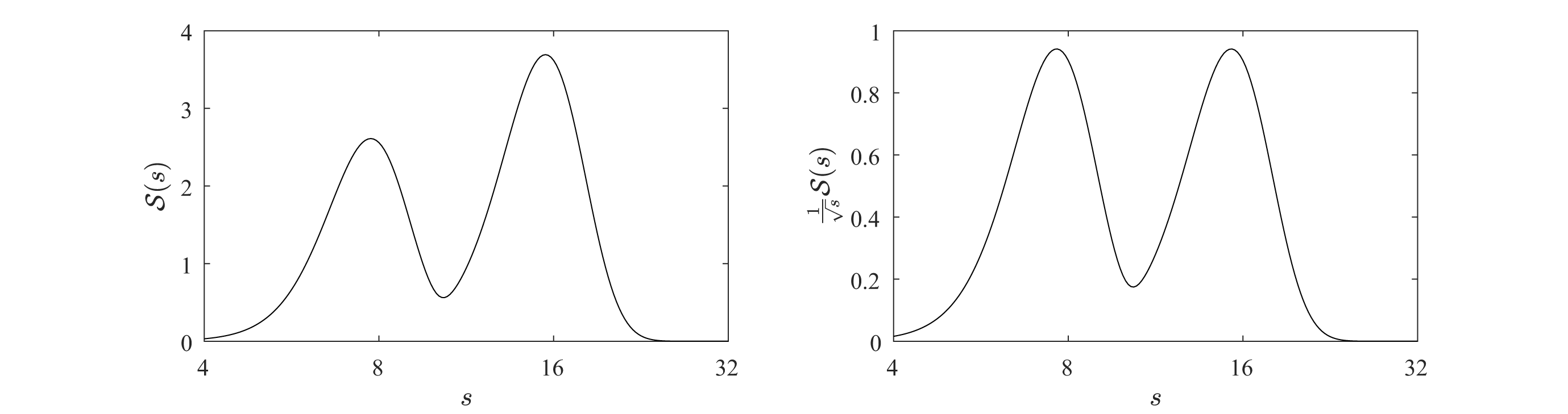}
\end{center}
\caption{Left: scalogram (using Morlet wavelet) of the signal $\sin\left( \frac{2\pi}{8}t\right) +\sin\left( \frac{2\pi}{16}t\right) $ with $t\in \left[ 0,1000\right] $, i.e. a combination of two signals of period $8$ and $16$ respectively with the same amplitude. There are two local maxima at the scales corresponding with the periods $8$ and $16$ respectively, but the scalogram takes different maximum values at these scales. Right: the same scalogram multiplied by the factor $\frac{1}{\sqrt{s}}$. Now, the scalogram takes the same maximum values, showing that both scales contribute with the same ``energy''.}
\label{fig:sqrts}
\end{figure}

\begin{figure}[tb]
\begin{center}
  \includegraphics[width=1\textwidth]{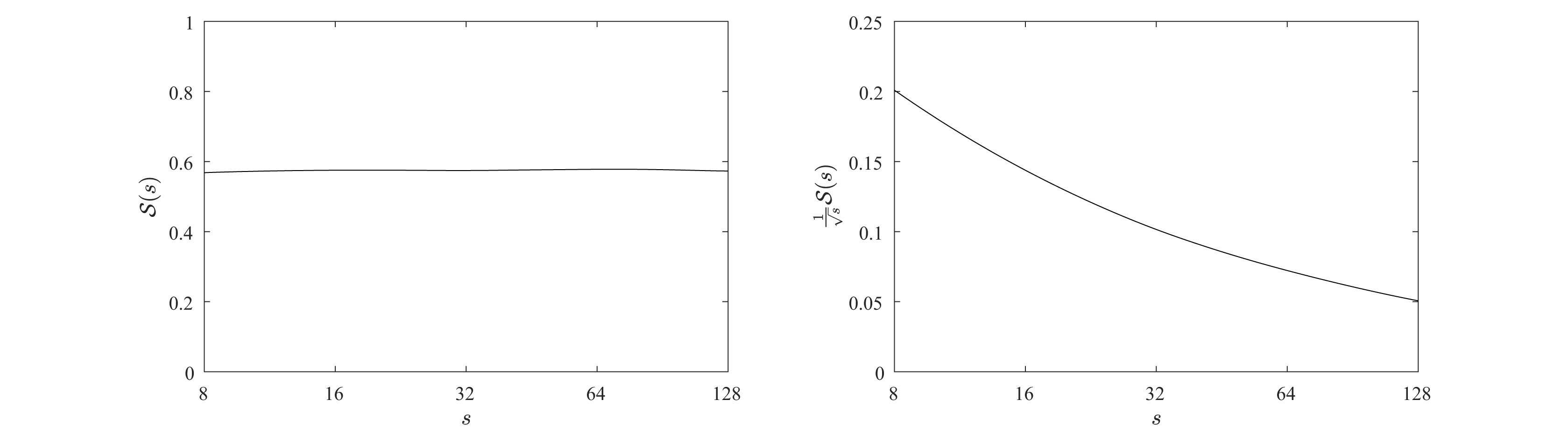}
\end{center}
\caption{Left: mean of the scalograms of $100$ random uniform series in $\left[ -1,1\right] $ of $10000$ data points, using Morlet wavelet. It can be seen that it is more or less constant and so, the corresponding scale index would be close to $1$ for any $s_1$. Right: mean of the $100$ scalograms of the same random uniform series but multiplied by the factor $\frac{1}{\sqrt{s}}$. It tends to zero for large scales and so, the scale index computed from this modified scalogram would also tend to zero as we increase $s_1$.}
\label{fig:sqrts2}
\end{figure}

\section{The windowed scale index}
\label{sec:wsi}

The scale index given by \eqref{eq:si} provides an estimate of the degree of non-periodicity of an entire time series. However, non-stationary signals may pass through different stages in which their behaviour varies considerably and therefore their non-periodicity may not be constant over time. For these cases, we introduce the next definitions.

\begin{definition}
The \textit{windowed scalogram} of a time series $f$ centered at time $t$ with time radius $\tau $ is given by
\begin{equation}
\label{wsc}
\mathcal{WS}_{\tau }(t,s) := \left( \int _{t-\tau } ^{t+\tau } | Wf\left( u,s\right) | ^2 \, \textrm{d}u \right)^{1/2}.
\end{equation}
The windowed scalogram was previously introduced in \cite{Bol17} and it is just the scalogram given by (\ref{scs}) restricted to a finite time interval $\left[ t-\tau ,t+\tau \right] $. 
\end{definition}

\begin{definition}
The \textit{windowed scale index} of a time series $f$ in the scale interval $[s_0,s_1]$ centered at time $t$ with time radius $\tau $ is defined as
\begin{equation}
wi_{\textrm{scale},\tau}(t) := \frac{\mathcal{WS}_{\tau} (t,s_{\textrm{min}})}{\mathcal{WS}_{\tau}(t,s_{\textrm{max}})},
\end{equation}
where, analogously to Definition \ref{def:si}, $s_{\textrm{max}}$ is the smallest scale such that $\mathcal{WS}_{\tau}(t,s)\leq \mathcal{WS}_{\tau}(t,s_{\textrm{max}})$ for all $s\in [s_0,s_1]$, and $s_{\textrm{min}}$ is the smallest scale such that $\mathcal{WS} _{\tau} (t,s _{\textrm{min}}) \leq \mathcal{WS} _{\tau} (t,s)$ for all $s \in \left[ s _{\textrm{max}}, 2s_1 \right] $.
\end{definition}

Although in the computation of the scale index it is recommended the use of normalized inner scalograms in order to avoid border effects, this recommendation is less important in the case of the windowed scale index. This is because for long series and relatively small time radii there would be no border effects in most of the windowed scalograms.

As the original scale index, the windowed scale index is also in the interval $[0,1]$ and it inherits the properties of the scale index that make it a tool for measuring the degree of non-periodicity of a signal around a given time. Precisely, this time dependence makes it a suitable tool for the study of non-stationary signals.

With respect to the choice of the time radius $\tau $, it depends on the nature of the time series $f$ and the objective of the study. In general, it must be big enough so that the windowed scalograms conserve desired (not noisy) information about $f$, but a too large $\tau $ produces inaccuracy in the location of events for non-stationary time series. Usually, if $\left[ a,b\right] $ is the support of $f$, then $\tau =\left( b-a\right) /20$ is a good choice.

\section{Examples and applications}
\label{sec:ex}

In this section we will illustrate the application of the windowed scale index to different scenarios of non-stationary time series such as chaotic time series, signals affected by non-stationary random noises and real time series coming from finance and economy which present both chaos and noise.

Computations were carried out using a self-developed R package \cite{Rsoft}, \texttt{wavScalogram}\cite{wavscalogram}, which can be freely downloaded from CRAN.

\subsection{The Bonhoeffer-van der Pol oscillator}

As it was done in \cite{Ben10}, the applicability of the windowed scale index is illustrated by studying some time series generated by the Bonhoeffer-van der Pol oscillator (BvP), which is given by the following non-autonomous planar system
\begin{equation}
\label{eq:bvdp}
\left. \begin{array}{rcl}
x'&=&x-\displaystyle{\frac{x^3}{3}}-y+I(t) \\
y'&=&c(x+a-by)\\
\end{array} \right\} ,
\end{equation}
being $a=0.7$, $b=0.8$, $c=0.1$, and $I(t)=A\cos \left( 2\pi t\right) $ an external periodic force. This oscillator is employed for modelling cardiac pulse (see \cite{Scot77}), and hence it is very useful to examine the non-periodicity of a signal without loosing the time location.

Taking the amplitude of the external force $A$ as a parameter, we can construct time series by means of the $x$ coordinate of the solution of \eqref{eq:bvdp} with initial conditions $(0,0)$. Specifically, we have created two time series for $A=0.74$ and $A=0.76$, with $t$ from $20$ to $400$ using a time step of $\Delta t=0.05$, which are called ``series 1'' and ``series 2'', respectively. Moreover, we have constructed another time series, designated by ``series 3'', using $A=0.74$ for $t\in \left[ 20,400\right] $ and $A=0.76$ for $t\in \left] 400,780\right] $. The normalized inner scalograms of these series are displayed in Figure \ref{figescint074076all}, using Daubechies eight--wavelet function and scales running from $0.05$ to $12.8$.

\begin{figure}[tbp]
\centering
\includegraphics[width=1\textwidth]{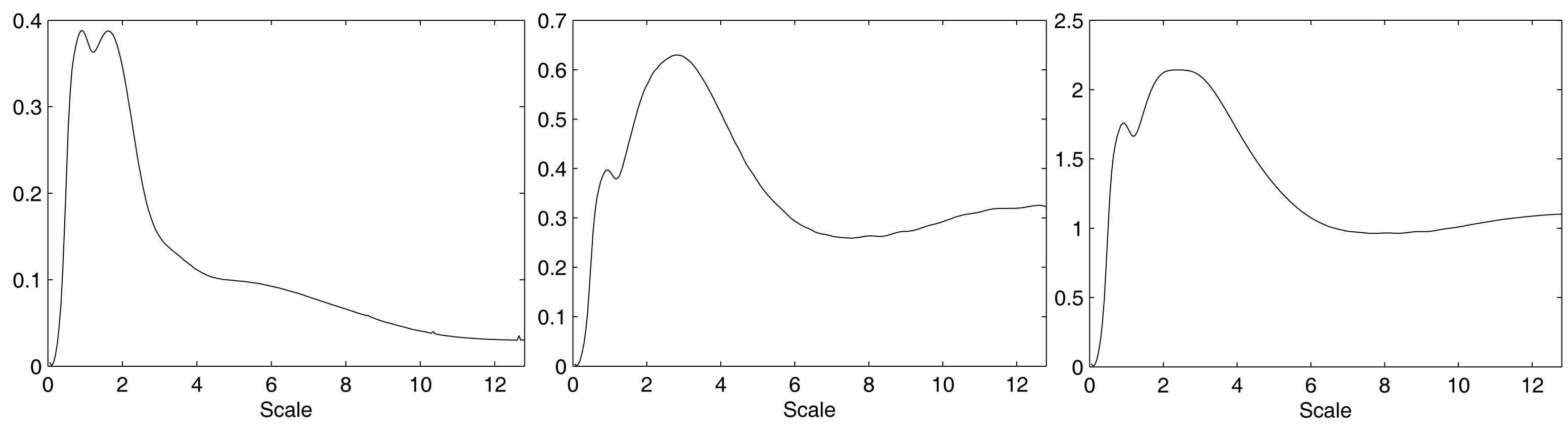}
\caption{Normalized inner scalograms of series 1 (left), 2 (center) and 3 (right), using Daubechies eight--wavelet. The scale runs from $0.05$ to $12.8$.}
\label{figescint074076all}
\end{figure}

For the computation of the windowed scale indices of these series (see Figure \ref{figsi074076all}), we have taken a time radius $\tau =50$ (i.e. $1000$ data points), $s_0 = 0.05$, $s_1 = 6.4$ and windowed normalized inner scalograms are utilized. This means that we have not taken into account CWTs where the corresponding wavelet overlaps the extremes of the time domain ($\left[ 20,400\right] $ for series 1 and 2, and $\left] 20,780\right] $ for series 3), and the scalograms are normalized as it is shown in Definition \ref{def:innersc}. As a curiosity, the global scale index of series 3 is approximately $0.4494$, which is greater than the scale index of series 1 ($0.0768$ approx.) and series 2 ($0.4116$ approx.). It is clearly shown in Figure \ref{figsi074076all} that the windowed scale index is more appropriate for studying non-stationary signals as it allows one to observe the evolution of the scale index over time.

\begin{figure}[tbp]
\centering
\includegraphics[width=1\textwidth]{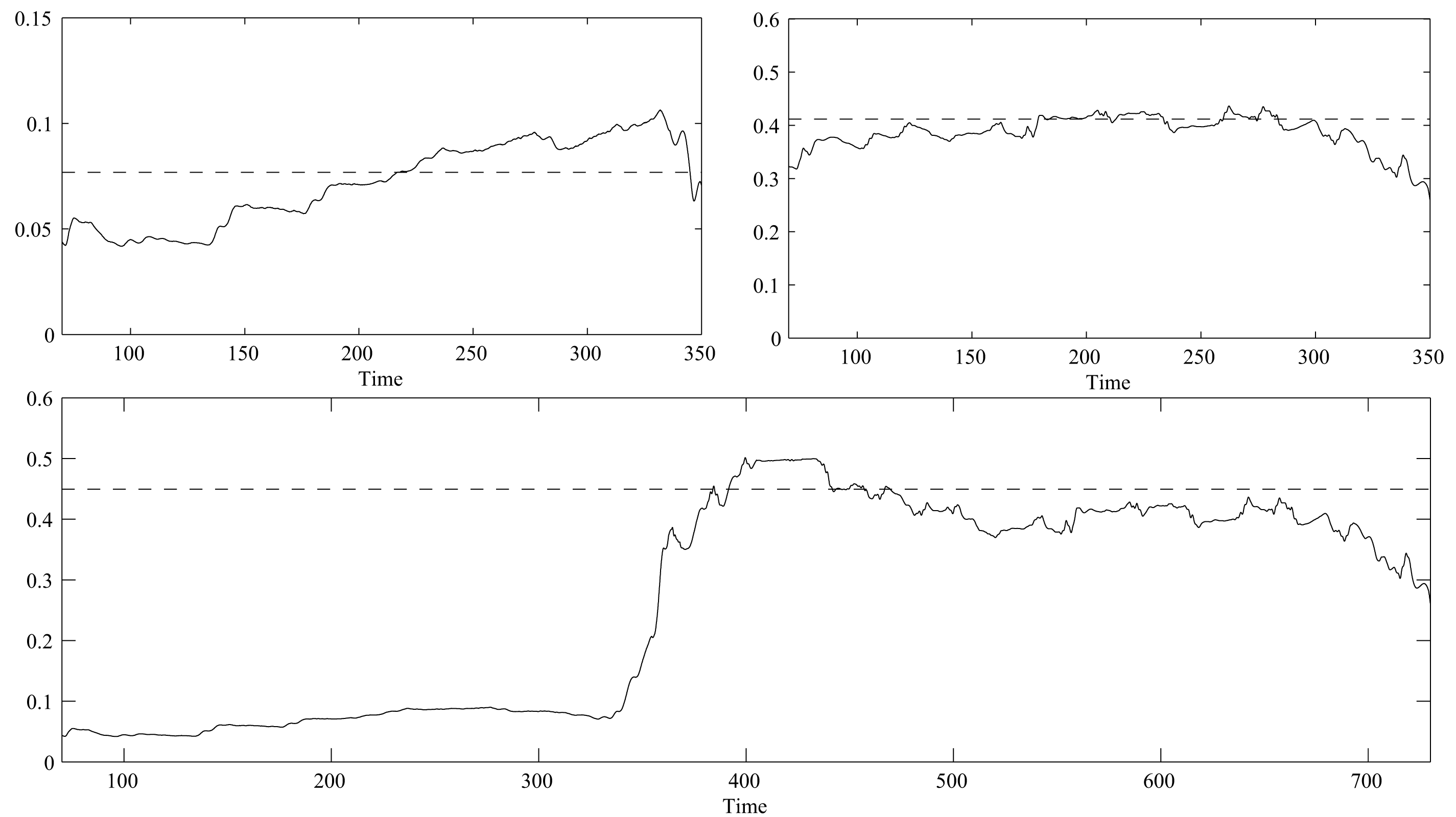}
\caption{Windowed scale indices of series 1 (up left), 2 (up right) and 3 (bottom) centered at different times with time radius $\tau =50$. The scale parameters run from $s_0=0.05$ to $s_1=6.4$ and we use Daubechies eight--wavelet function. The corresponding global scale index is represented with a dashed line.}
\label{figsi074076all}
\end{figure}

Finally, Figure \ref{figbvp074076_all_256} depicts the different stages for obtaining the windowed scale indices of series 3 for a wide set of values of $s_1$, ranging from $0.4$ to $6.4$. In this way, we can see the importance of the choice of $s_1$. Note that Figure \ref{figsi074076all} (bottom) only plots the windowed scale indices for $s_1 = 6.4$.

\begin{figure}[tbp]
\centering
\includegraphics[width=1\textwidth]{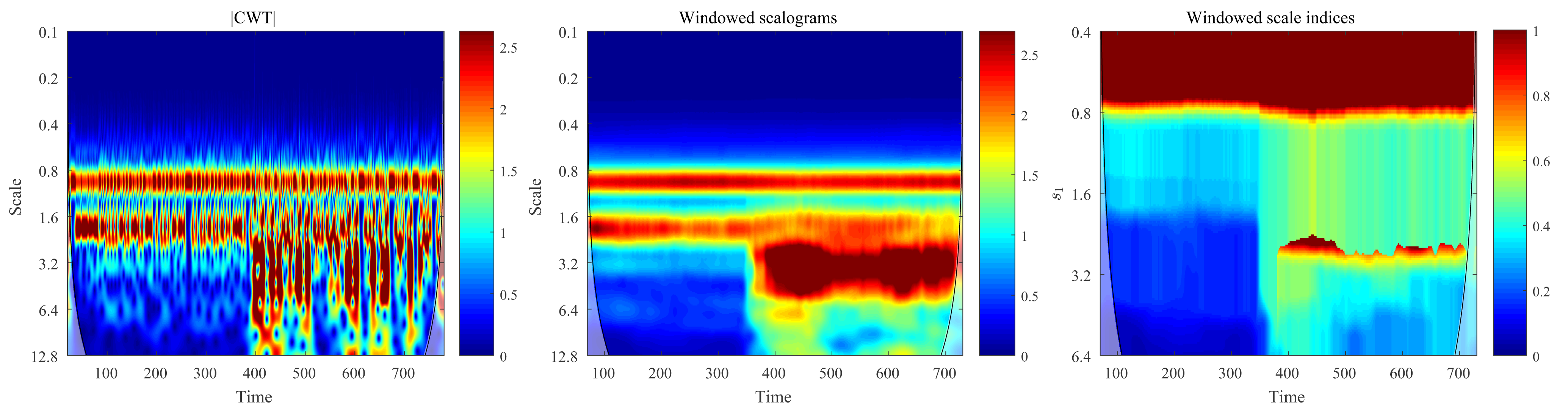}
\caption{Different stages for obtaining the windowed scale indices of series 3. Left: moduli of the CWT. Center: windowed scalograms centered at different times with time radius $\tau =50$. Right: windowed scale indices centered at different times with time radius $\tau =50$, taking $s_0=0.05$ and different values of $s_1$ from $0.4$ to $6.4$. The cone of influence outside which there are border effects is also represented with a black line in all plots.}
\label{figbvp074076_all_256}
\end{figure}

\subsection{Signals with increasing noise}
The windowed scale index is not only useful to study chaotic signals, as the one seen in the above example, but also it allows us to analyse other non-stationary signals such as those affected with non-stationary random noises. For example, we may consider signals of the form
\begin{equation}
\label{eq:elmar}
f(t)=\sin\left( 2\pi t\right) + \epsilon_t,\qquad \epsilon_t \sim \mathcal{N}\left( \mu =0, \sigma ^2=0.1t\right) ,
\end{equation}
which correspond to a $1$-periodic signal plus a Gaussian noise term $\epsilon_t$ with an increasing variance. In Figure \ref{figelmar} it is represented the average of the windowed scale indices of $1000$ series of the form \eqref{eq:elmar} taking $s_0=0.02$ and different $s_1$ values, using Morlet wavelets and time radius $\tau =5$. Since there is a $1$-periodic term, $s_1$ must be greater or equal than $1$ in order to detect this periodicity. Hence, for $s_1\geq 1$, it can be seen that the windowed scale indices increase in time along with the noise. But choosing a too large $s_1$ has drawbacks: the larger $s_1$ is, the less pronounced this increase will be.

\begin{figure}[tbp]
\centering
\includegraphics[width=1\textwidth]{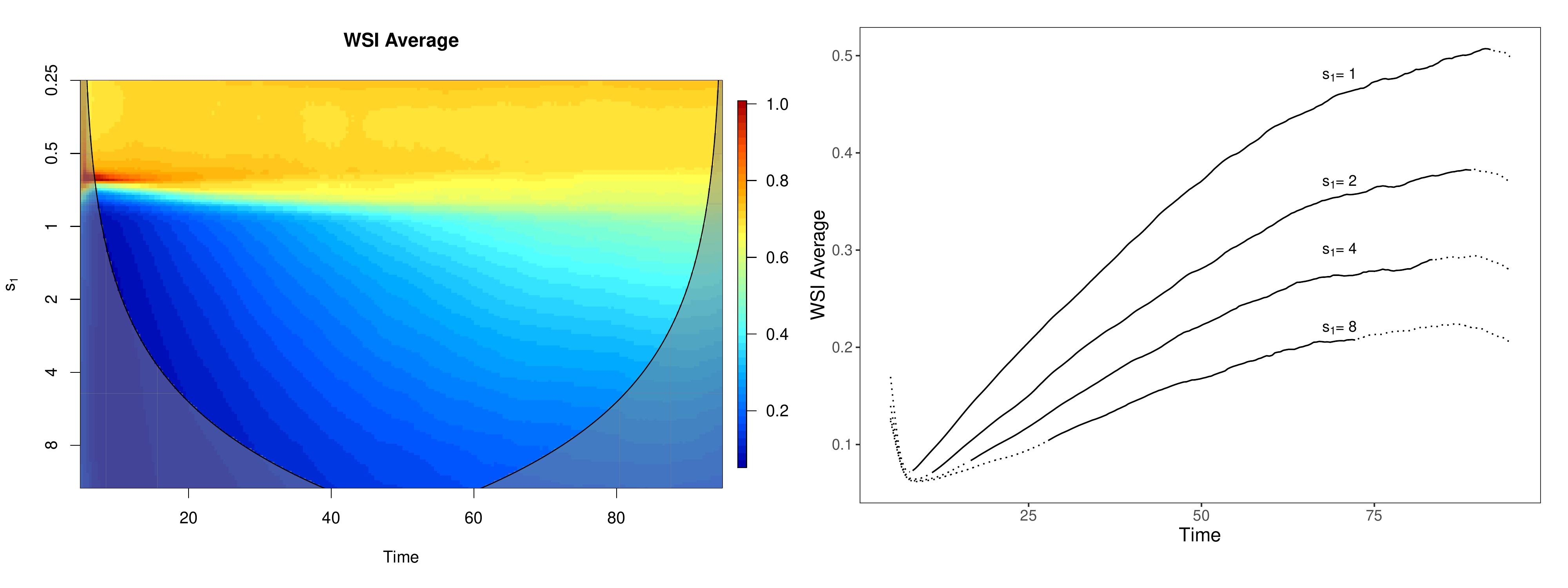}
\caption{Average of the windowed scale indices of $1000$ series of the form \eqref{eq:elmar} taking $s_0=0.02$ and different values for $s_1$, using Morlet wavelets and time radius $\tau =5$. The cone of influence outside which there are border effects is represented with a black line (left) and dots (right).}
\label{figelmar}
\end{figure}

\subsection{An economic application: crude oil and gold prices}

An interesting application of the windowed scale index in the field of economics consists of quantifying the degree of non-periodicity of the price of two major commodities such as crude oil and gold at different points in time and for certain ranges of scale. This analysis allows one to determine to what extent these time series have a periodical or repetitive behaviour.

Crude oil and gold are the most actively traded mineral commodities in the world and play a critical role in human civilization. As mentioned by \cite{Agu17}, oil is by far the most strategic commodity and the most sizable source of energy for the modern economy as it constitutes a vital input in the production process of many goods and services. The importance of crude oil is so great that changes in oil price affect strongly global economic growth, inflation and asset values. Meanwhile, gold is the leader in precious metal markets and is demanded for jewelry purposes and for many other industries such as dental, electronics and chemicals. Furthermore, gold has had a unique historic function as a means of exchange and a store of value for millennia and, more recently, as an alternative investment that hedges against inflation, a falling U.S. dollar or other forms of uncertainty. Due to its singular nature, gold is widely considered to be an effective safe haven asset against losses in financial markets, especially in times of financial and economic instability \cite{Bau10a,Bau10b,Cin13}.

The dataset used in this application consists of daily closing prices of futures contracts for gold and crude oil. Specifically, for oil the prices from nearby contracts of crude oil futures traded on the New York Mercantile Exchange (NYMEX) are utilized. For gold, the nearby contract prices of gold futures traded on the COMEX division of the NYMEX, which is the most liquid gold contract in the world, are employed. Changes in prices of oil and gold futures contracts are used in order to better capture the possible repetitive or periodical nature of these two series. The sample period spans from June 24, 1988 until June 20, 2017, with a total of 7564 daily observations. All data are collected from Thomson Reuters DataStream.

Figure \ref{oil_gold_wsc_si} displays the windowed scalograms and windowed scale indices of changes in the price of crude oil and gold futures contracts. The windowed scalograms use a time radius $\tau =300$ days and the windowed scale indices are calculated from these windowed scalograms taking $s_0=8$ days and a wide range of scales $s_1$, from $32$ to $512$ days. We have not used inner scalograms in this case. Therefore, the corresponding cones of influence outside which there are border effects have been plotted.

\begin{figure}[tbp]
\centering
\includegraphics[width=1\textwidth]{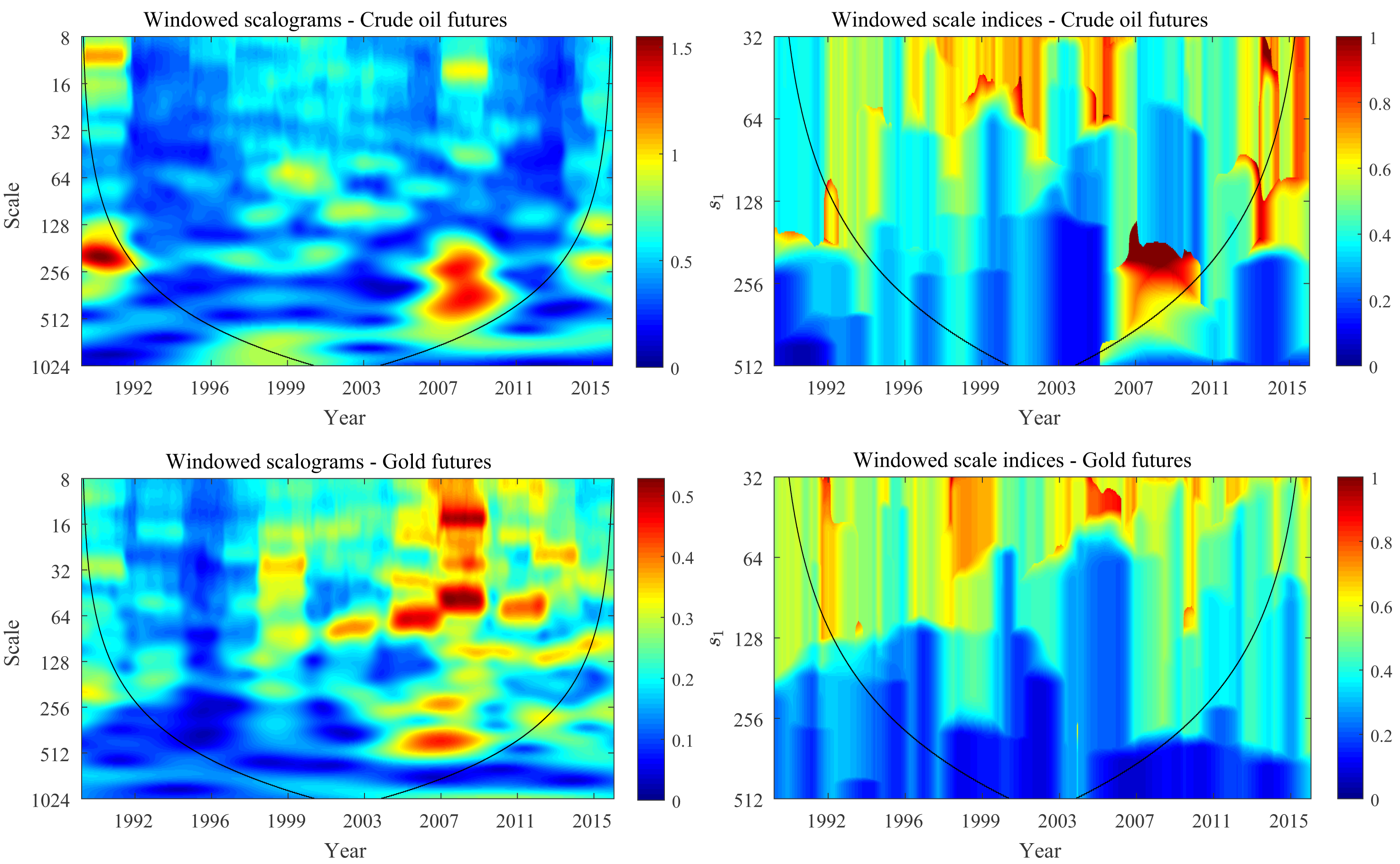}
\caption{Windowed scalograms and windowed scale indices of crude oil futures contracts (top) and gold futures contracts (bottom) price returns. Left: windowed scalograms centered at different times with time radius $\tau =300$ days. Right: windowed scale indices centered at different times with time radius $\tau =300$ days, taking $s_0=8$ days and different values of $s_1$ from $32$ to $512$ days. The cone of influence outside which there are border effects is also represented with a black line in all plots.}
\label{oil_gold_wsc_si}
\end{figure}

It is shown that the windowed scale indices for oil futures contracts take very high values in the time period between 2007 and 2010 for scale intervals with $s_1$ between $150$ and $300$ days (medium term). This finding may be closely related to an important shift in the behavior of oil price from 2007. In particular, as can be seen in the windowed scalograms for crude oil futures contracts in Figure \ref{oil_gold_wsc_si}, the scales between $256$ days (approximately one year) and $512$ days (two years) become the scales that contribute most to the energy of the oil price series during the period 2007-2010, and by taking $s_1$ less than $300$, these important scales are not taken into account. Hence, these high values of the windowed scale indices for oil futures can be interpreted as a clear symptom of unpredictability (see Section \ref{sec:unpred}) in oil price at the medium term during the period around the financial crisis that started in mid-2007 in the U.S. subprime mortgage market and spread throughout the world from September 2008 following the collapse of the U.S. investment bank Lehman Brothers. During this period, crude oil price experienced drastic rises and falls mostly driven by positive and negative shifts in global aggregate demand. For example, as noted by \cite{Rat13}, the rapid increase in the spot price per barrel of West Texas Intermediate (WTI) crude oil from \$58 in January 2007 to \$140 in June 2008 may be associated with a strong global economic activity. In turn, the fall in oil price to \$41.68 in January 2009 may be attributed to the global financial crisis and the subsequent great recession. However, the WTI oil price rebounded to \$133.93 in April 2011, even though global economic activity remained moderate. Therefore, the abnormally volatile development of crude oil price over the last years may have largely contributed to the high degree of unpredictability of oil price at the medium term around the recent global financial crisis, which is captured by the windowed scale indices.

In contrast, the windowed scale indices for gold futures contracts do not take generally high values despite the fact that during the financial crisis period there were also significant changes in the energies contributed by most of the scales, as is shown in the windowed scalograms plotted in Figure \ref{oil_gold_wsc_si}. This implies a more predictable behaviour of the price of gold for all time scales over the last few decades in comparison with oil prices. This result may be due to the fact that gold price has followed a secular upward trend, which has been only interrupted sporadically and for very short periods of time. For instance, unlike conventional financial assets such as stocks or bonds, gold price markedly increased during the recent global financial crisis. This evidence confirms that gold acts as a safe haven or a hedge in times of financial turbulence as its inclusion in traditional portfolios reduces investors' losses in the face of extreme negative market shocks.

\subsection{Non-periodicity and unpredictability}
\label{sec:unpred}

So far, we have showed how the windowed scale index can be used to assess the degree of non-periodicity of a time series over time and thus it has been proven useful for determining shifts in the periodicity regime. Since periodicity and predictability represent closely related concepts, it seems reasonable to compare the windowed scale index with a standard measure of the level of unpredictability of a time series.

The \textit{sample entropy} (SampEn) \cite{Joshua2000} is a measure of the signal's complexity and ultimately of its unpredictability, in the sense that the higher the SampEn is, the more unpredictable the signal will be. In essence, the SampEn is a modification of the \textit{approximate entropy} (AppEn) \cite{Pincus1991}, which was designed to measure the unpredictability of time series, but it overcomes some of the limitations of the AppEn, such as the signal length dependence. Both measures are based on the Kolmogorov-Sinai statistic \cite{Sinai59} and have been mainly used in the analysis of biomedical data (e.g. EEG signals). Similarly, the scale index (global or windowed) may also be seen as a measure of  a time series unpredictability in so far as the degree of non-periodicity intuitively gives us an insight into how the future values of a signal may be predicted from the past ones.

In order to compare the SampEn with the windowed scale index, the sample entropies are computed on a rolling window of the same radius as the one employed for the windowed scale indices estimation. The computations are performed using the R statistical software and the SampEn functions provided by the package \texttt{pracma} \cite{pracma}. For the artificial signals generated by the Bonhoeffer-van der Pol chaotic dynamical system, Figure \ref{figsen074076all} depicts the sample entropies for a rolling window of radius $\tau = 50$ observations and the global SampEn calculated for the whole signal. Although the overall value of the SampEn is slightly higher for time series 2, which corresponds to a more chaotic signal than time series 1, the difference is so small that it can be considered negligible. By contrast, when comparing Figure \ref{figsen074076all} with Figure \ref{figsi074076all}, it can be seen that the windowed scale index is able to clearly distinguish between different degrees of chaos or non-periodicities. 

\begin{figure}[tbp]
\centering
\includegraphics[width=1\textwidth]{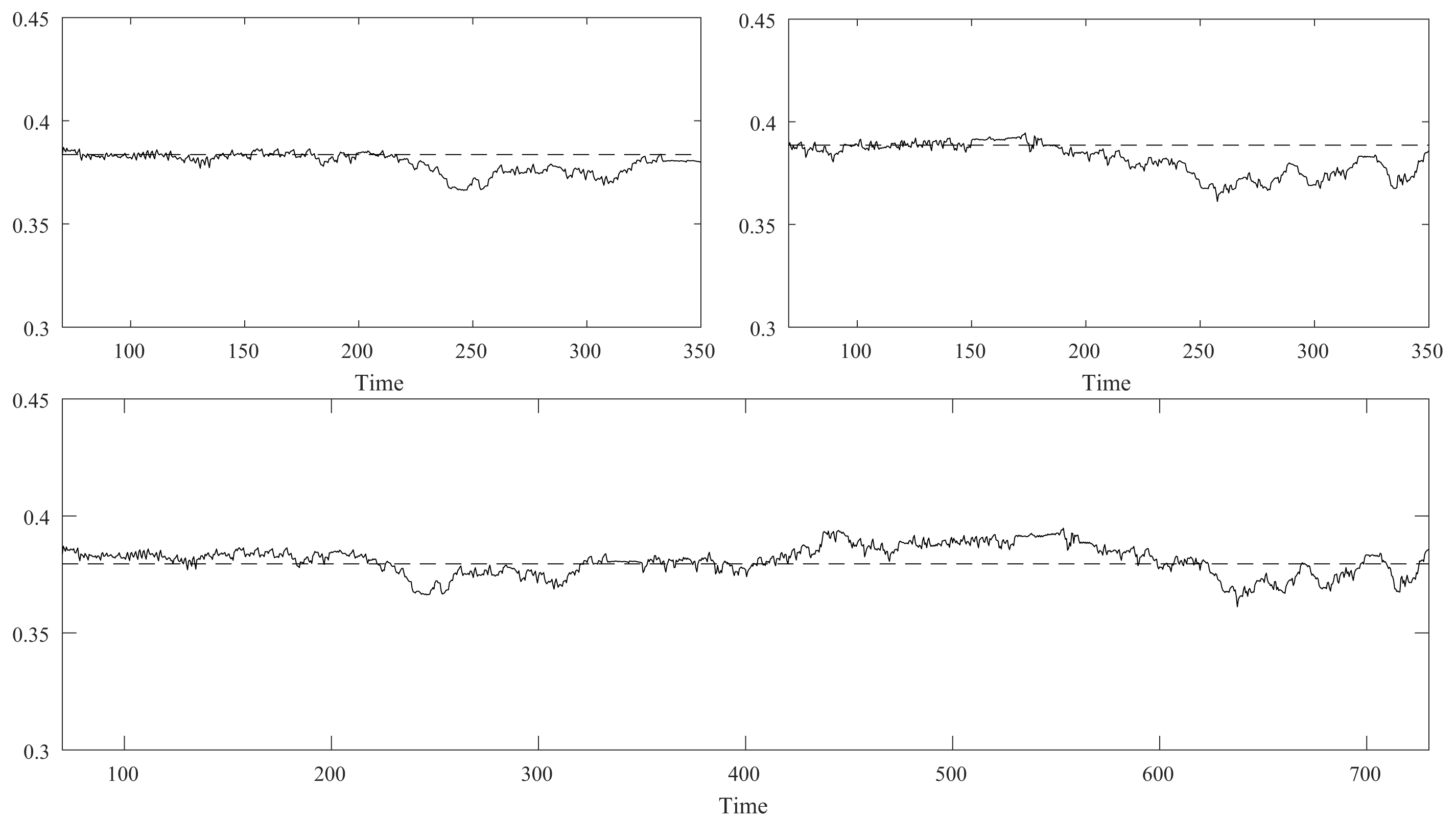}
\caption{Sample entropies of series 1 (up left), 2 (up right) and 3 (bottom) considering subseries centered at different times with time radius $\tau =50$. The corresponding global SampEn is represented with a dashed line.}
\label{figsen074076all}
\end{figure}

Likewise, the sample entropies are estimated for the crude oil and gold futures contracts in Figure \ref{figsenwsi_oilgold} (top), using a rolling window of radius $\tau =300$ days. The global SampEn for the oil futures contracts series yields overall higher values than for the gold futures contracts (dashed lines) but the differences are again very small. On the other hand, the sample entropies estimated on a rolling window (solid line) take, for certain time intervals, larger values for the crude oil futures than for the gold futures contracts, and vice versa. In order to compare with the windowed scale index, in Figure \ref{figsenwsi_oilgold} (bottom) we also show the sections at $s_1 =256$ days (a representative scale for medium term) of the windowed scale indices plots of Figure \ref{oil_gold_wsc_si}, in which a time radius of $\tau =300$ days was also employed. Moreover, the corresponding global scale index is represented by a dashed line. It can be seen that the windowed scale indices of crude oil futures contracts increase considerably during the global financial crisis, unlike what happens with the gold futures contracts. Besides, the global scale index of the gold futures contracts is slightly greater than the global scale index of the crude oil futures contracts, showing that in this case, the windowed scale index provides more reliable and detailed information than the global scale index.

\begin{figure}[tbp]
\centering
\includegraphics[width=1\textwidth]{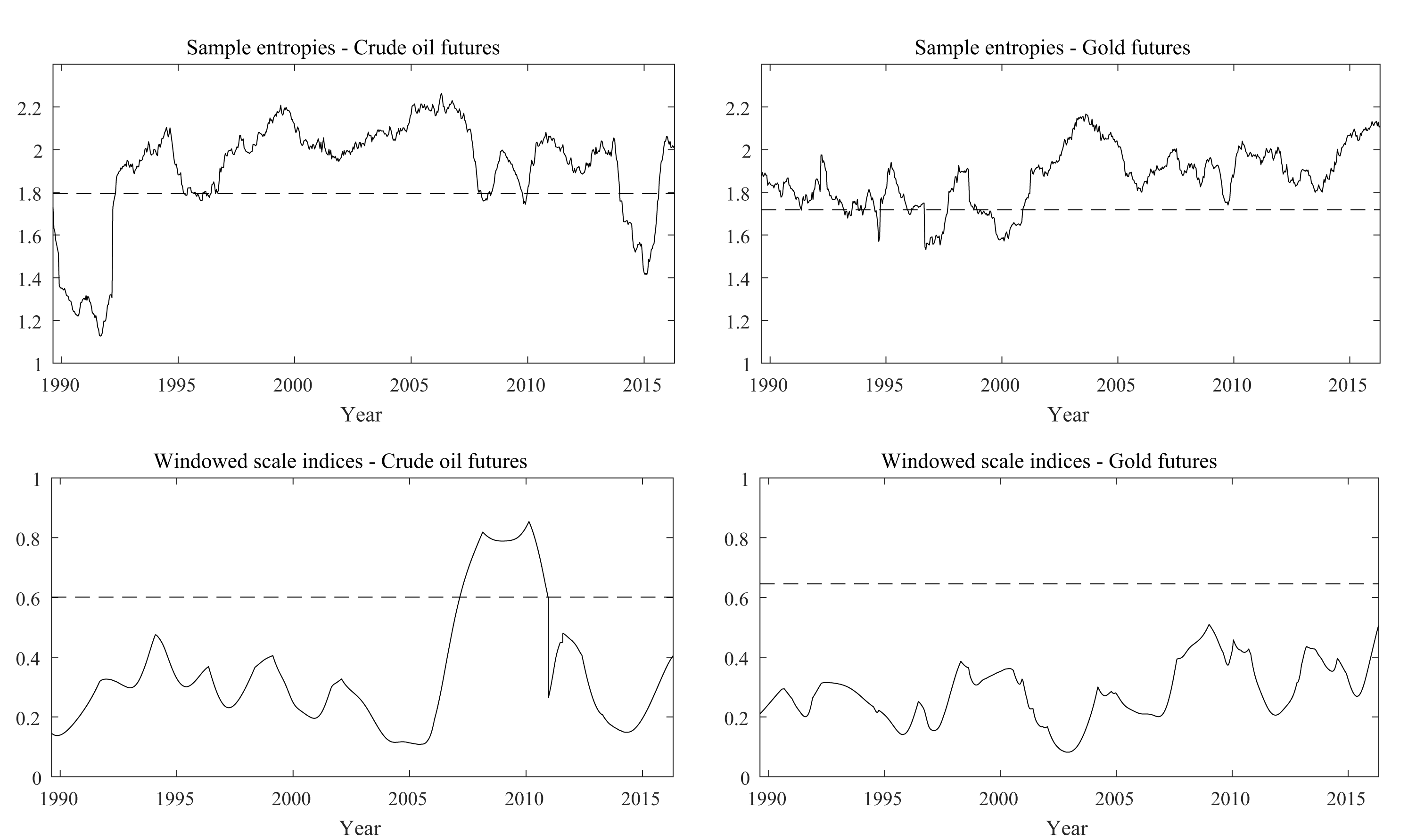}
\caption{Sample entropies and windowed scale indices of crude oil and gold futures contracts price returns. For the sample entropies (top), we have considered subseries centered at different times with time radius $\tau =300$ days. With respect to the windowed scale indices (bottom), they are centered at different times with time radius $\tau =300$ days, taking $s_0=8$ days and $s_1=256$ days. The corresponding global SampEn and scale index are represented with dashed lines.}
\label{figsenwsi_oilgold}
\end{figure}

\section{Conclusions}
\label{sec:conc}

The scale index proposed by \cite{Ben10} has proven to be a useful wavelet tool to assess the degree of non-periodicity of a time series. Such relevance is based on the result according to which the wavelet transform of a periodic time series vanishes when the scale is twice the period. From a theoretical point of view, it is very interesting to study whether the converse is true or not. In this work, it is shown that, at least for the Haar wavelet, the vanishing of the wavelet transform at a certain scale (for all times) guarantees the periodic character of the time series.

The windowed scale index introduced in this paper has its origin in the scale index and is particularly adequate to measure the degree of non-periodicity of non-stationary signals, whose non-periodicity may vary over time. In this regard, the windowed scale index can be used to detect changes in the predictability regime of time series, as it is evidenced in the economic application on crude oil and gold futures contracts.

In addition, our results suggest that the use of the windowed scale index as a measure of the unpredictability of a time series has several advantages over some standard unpredictability measures such as the SampEn or the AppEn. The first one is about the interpretability. Specifically, the windowed scale index takes values between 0 and 1 and this boundedness allows us to clearly state how non-periodic a time series is. Meanwhile, the SampEn and AppEn are not bounded, so it is harder to say whether a given entropy value is high or low and, consequently, whether a given time series is highly unpredictable or not. A second advantage of the windowed scale index over the entropy parameters is its sensitivity. In particular, the windowed scale index presents a higher sensitivity to distinguish between different levels of non-periodicity. For instance, while the windowed scale index is able to clearly identify that time series 1 in the BvP oscillator example is much less non-periodic than time series 2, the SampEn values are almost identical. Similarly, in the economic application the windowed scale index shows how during the recent global financial crisis the oil futures contracts exhibit a higher degree of non-periodicity than gold futures contracts. This finding seems reasonable taking into account that gold has followed a secular upward trend as it is traditionally considered a safe haven in times of financial turmoil. In contrast, the SampEn parameters provide almost undistinguishable values and when computed in a rolling window framework, they are unable to identify with clarity shifts in the pattern of unpredictability caused by the international financial crisis of 2007-2008.

\bibliography{manuscript}
\bibliographystyle{unsrt}

\end{document}